\newtheorem{theorem}{Theorem}
\newtheorem{lemma}{Lemma}
\newtheorem{proposition}[theorem]{Proposition}
\newtheorem{example}{Example}
\theoremstyle{definition}
\newtheorem{definition}{Definition}
\theoremstyle{remark}
\newtheorem{remark}{Remark}
\newcommand{\mynewline}{\mbox{}\\}
\newcommand{\E}[1]{\mathbb{E} \left( #1 \right)}
\newcommand{\EQn}[1]{\mathbb{E}_{Q_n(\bx)} \left( #1 \right)}
\newcommand{\BRA}[1]{\left( #1 \right)}
\newcommand{\BRAs}[1]{\left\{ #1 \right \}}
\newcommand{\PR}[1]{Pr\left\{ #1 \right\}}
\newcommand{\bX}{{\textbf X}}
\newcommand{\cX}{{\mathcal X}}
\newcommand{\cY}{{\mathcal Y}}
\newcommand{\cS}{{\mathcal S}}
\newcommand{\cC}{{\mathcal C}}
\newcommand{\cD}{{\mathcal D}}
\newcommand{\cM}{{\mathcal M}}
\newcommand{\bx}{\textbf{x}}
\newcommand{\by}{\textbf{y}}
\newcommand{\calX}{{$\mathcal X$}}
\newcommand{\ie}{{\emph{i.e.}}}
\newcommand{\eg}{{\emph{e.g.}}}
\newcommand{\cf}{{\emph{c.f.}}}
\newcommand{\MAX}[1]{ \smash{\displaystyle\max_{#1}} }
\newcommand{\MIN}[1]{ \smash{\displaystyle\min_{#1}} }
\newcommand{\metric}[1]{#1 \left( \textbf{x}, \textbf{y} \right)}
\newcommand{\pe}[1]{p_{e,#1,\bx,\by}}
\newcommand{\pem}{\pe{m}}
\newcommand{\limtoinf}{\lim_{n \to \infty }}
\newcommand{\toinf} {\underset{n \rightarrow \infty}{\rightarrow}}
\newif\ifISITpaperFull
\begin{document}

\title{A Universal Decoder Relative to a Given Family of Metrics}

\author{\authorblockN{Nir Elkayam and Meir Feder}
\authorblockA{Department of Electrical Engineering - Systems\\
Tel-Aviv University, Israel \\
Email: nirelkayam@post.tau.ac.il, meir@eng.tau.ac.il}}

\maketitle

\let\thefootnote\relax\footnotetext{This work was partially supported by the Israeli Science Foundation (ISF), grant no. 634/09.}

\subsection*{\centering Abstract}
\textit{
Consider the following framework of universal decoding suggested in \cite{merhav:universal}. Given a family of decoding metrics and random coding distribution (prior), a single, universal, decoder is optimal if for any possible channel the average error probability when using this decoder is better than the error probability attained by the best decoder in the family up to a subexponential multiplicative factor.
We describe a general universal decoder in this framework. The penalty for using this universal decoder is computed. The universal metric is constructed as follows. For each metric, a {\em canonical} metric is defined and conditions for the given prior to be {\em normal} are given. A sub-exponential set of canonical metrics of normal prior can be merged to a single universal optimal metric. We provide an example where this decoder is optimal while the decoder of \cite{merhav:universal} is not.
}

\section{Introduction}\label{sec:intro}

\nocite{ElkayamEE2014}
Metric based decoders are usually used to decode codes in digital communication. Typically, the metric tries to capture the most likely codeword that was transmitted. When the channel over which the codeword was transmitted is known at the receiver, the \emph{Maximum Likelihood} (ML) decoder is used for optimal performance (in terms of block error rate). When the channel is not known at the receiver, other solutions are required.

Usually another assumption is made, for example, the channel $W_{\theta}(\by|\bx)$ belongs to some family of channels indexed by $\theta \in \Theta$. The family of channels can have some measure defined over it (Bayesian approach), \eg, fading channel, or the channel is one of several possible channels (deterministic approach), \eg, discrete memoryless channels (DMCs). The performance in the first case can be measured over the channel realization, and in the latter per specific channel.

A universal decoding is \emph{optimal} with respect to a given family of channels if the performance when using the universal decoder is not worse (in the error exponent sense) than when the optimal decoder is used; specifically, the error exponent is not smaller than that of the optimal decoder.

In \cite{goppa1975nonprobabilistic}, Goppa offered the \emph{maximum mutual information} (MMI) decoder, which decides in favor of the codeword having the maximum empirical mutual information with the channel output sequence. Other universal decoders have been suggested over the years for ISI channels \cite{DBLP:journals/corr/HuleihelM14, lapidoth2000gaussian, farkas2008blind},Interference channels \cite{merhav1993universal}, finite state channels \cite{ziv1985universal}, \cite{lapidoth1998universality}, Individual channels \cite{DBLP:journals/tit/LomnitzF13}, and \cite{DBLP:journals/tit/LomnitzF13a}. In \cite{feder1998universal}, Feder and Lapidoth provide a fairly general construction of a universal decoder for a given class of channels. Their construction has two key points. The first is the use of uniform random coding distribution (prior), which means that all codewords have the same probability; this allows the use of the \emph{merged decoder}. The second is the definition of a separable family, which means that there is a countable set of channels that can be used as representative, in the sense that if the performance of the universal decoder is good over these channels, then it would be good over the whole family of channels.

In \cite{merhav:universal} Merhav proposed another framework of universal decoding, namely, universality relative to a given set of decoding metric over arbitrary channels, where optimality means that the performance of the universal decoder is not worse (in the error exponent sense) than the performance when any other metric from the family is used. The decoder proposed by Merhav in \cite{merhav:universal} uses an equivalent of ``types'' that is a set of input and output sequences of the channel for which the metrics of the family provides the same value. The universal decoder that is based on ties in the metric, thus, does not seem to be the most general.

In this paper we generalize the construction in \cite{merhav:universal}, which in some sense also generalizes Feder and Lapidoth`s universal decoder, \cite{feder1998universal}. The quantity $\pem$, which is the error probability when using the metric $m$, given that $\bx$ was transmitted and $\by$ was received, is defined. This quantity plays an important role in the computation of the average error probability and can be used to compare the performance between different metrics. Based on $\pem$ we will define, for a metric $m$, a ``canonical metric'' $\overline{m}\BRA{\bx,\by}=-\log \pem$, which is equivalent to $m$. The canonical metrics (or the associated error probabilities $\pem$) serve, in turn, to define a universal decoder relative to a set of metrics, called the \textbf{Generalized Minimum Error Test}. We derive the penalty of using this universal metric, and show when this metric is optimal (in the error exponent sense).

The definition of the canonical metric is related to the definition in \cite{lomnitz2012communication} of \emph{empirical rate function}. This rate function, which is a function of the transmitted word $\bx$ and the received word $\by$, was defined in an attempt to remove the probabilistic assumption in communication and get a rate function that fits the individual sequences of input and output. Specifically, after we introduce the notion of normal prior, we show that for every metric $m$ over a normal prior there exists a ``tight'' empirical rate function, \ie, the canonical metric, and $\overline{m}\BRA{\bx,\by} \approx \log\BRA{ \frac{P(\bx|\by)}{Q(\bx)}}$. Here $\approx$ means equal up to a sub-exponential factor and $P(\bx|\by)$ is a probability distribution over $\cX$. A set of countable ``tight'' rate functions can be merged into a single universal decoding metric that is optimal with respect to all the decoding metrics in the set. This is the generalization of the merged decoder of \cite{feder1998universal}.

\section{Notation}

This paper uses bold lower case letters (\eg, $\bx$) to denote a particular value of the corresponding random variable denoted in capital letters (\eg, $\bX$). Calligraphic fonts (\eg, \calX) represent a set.
Throughout this paper $\log$ will be defined as base 2 unless otherwise indicated. $\PR{A}$ will denote the probability of the event $A$. We will use the $o$ notation. Specifically, $o(n)$ will be used to indicate a sequence of the form: $n \cdot \lambda_n$ where $\lambda_n \toinf 0$ and $o(1)$ is a sequence that goes to 0. A sequence of positive numbers $\chi_n$ is said to be \textbf{subexponential} if $\limtoinf \frac{1}{n}\log(\chi_n) = 0$. With the $o$ notation this is just $\chi_n=2^{o(n)}$


\section{Preliminaries}

A rate-$R$ blocklength-$n$, \textbf{random code} with prior $Q_n(\bx)$ is:
\begin{align}
\cC = \BRAs{\bx(1), \bx(2), ..., \bx(2^{nR}) } \subset \cX^n,
\end{align}
where each codeword $\bx(i)$ is drawn independently according to the distribution $Q_n(\bx)$. The \textbf{encoder} maps the input message $k$, drawn uniformly over the set $\BRAs{1, ..., 2^{nR}}$, to the appropriate codeword, \ie, $\bx(k)$. A \textbf{decoding metric} is a function $m: \cX^n \times \cY^n \to \mathbb{R}$. The \textbf{decoder} associated with the decoding metric $m$ is the function $\cD_m: \cY^n \to \BRAs{1, ..., 2^{nR}}$ defined by:

\begin{align}
\cD_m(\by) = \hat{i} = argmax_{1 \leq i \leq 2^{nR}} \BRA{m(\bx(i),\by)}.
\end{align}

\begin{definition}
The \textbf{average error probability} associated with the decoder $\cD_m$ over the channel $W_n(\by|\bx)$ is denoted by $\bar{P}_{e,m,W}\BRA{R}$. The error probability captures the randomness in the codebook selection, the transmitted message, and the channel output.
\end{definition}

\begin{remark}
We will generally refer to a metric as a decoder where it would be understood that we refer to the decoder associated with the given metric.
\end{remark}

\begin{remark}
When the decoding metric $m(\bx,\by) = W_n(\by|\bx)$, the error probability is minimized and the decoder is called the Maximum Likelihood decoder. The \textbf{mismatch} case is when the metric is not matched to the channel and may result in performance degradation.
\end{remark}

\begin{definition}
For a family of decoding metrics:
\begin{align}\label{def:met_familty}
\cM_n = \BRAs{m_{\theta}(\bx,\by) : \theta \in \Theta_n, \bx \in \cX^n, \by \in \cY^n}.
\end{align}
Denote:
\begin{align}\label{def:met_error_probability}
\bar{P}_{e,\cM_n,W_n}\BRA{R} = \min_{\theta \in \Theta_n}\bar{P}_{e,m_{\theta},W_n}\BRA{R},
\end{align}
the minimum error probability over the channel $W_n(\by|\bx)$ when the best metric matched for this channel is used.
\end{definition}

\begin{definition}
A sequence of decoding metrics $u_n$, independent of $\theta$, and their associated decoders is called \textbf{universal}. The sequence of universal decoding metrics $u_n$ (and their decoders) is \textbf{optimal} (in the error exponent sense) with respect to the family $\cM_n$ if:
$$\limtoinf \frac{1}{n}\log\BRA{\frac{\bar{P}_{e,u_n,W(\by|\bx)}\BRA{R}}{\bar{P}_{e,\cM_n,W(\by|\bx)}\BRA{R}}} \leq 0$$
for every channel $W(\by|\bx)$.
\end{definition}

Consider a metric $\metric{m}$. The pairwise error given that $\bx$ was transmitted and $\by$ was received is given by:
\begin{align} \label{def:pair_err}
\pem = Q_n(m(\bX,\by) \geq m(\bx,\by)).
\end{align}

The probability given by \eqref{def:pair_err} is for a single (random) codeword $\bX$, to beat the transmitted word $\bx$ given that the received word was $\by$. Notice that in case of a tie, we are assuming that an error has occurred.


The term $\pem$ can be used to compute the average error probability for any rate $R$ as can be seen in the following lemma.
\begin{lemma} \label{lem:comm_total_error}
\begin{equation}\label{eq:union_bound_cliped}
 \frac{1}{2} \leq \frac{\bar{P}_{e,m,W}\BRA{R}} {\E{\min\BRA{1, 2^{nR} \cdot \pem}}} \leq 1,
\end{equation}
where the expectation is with respect to $Q(\bx)W(\by|\bx)$.
\end{lemma}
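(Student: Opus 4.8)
\emph{Proof approach.}
The plan is to evaluate the error probability by conditioning on the transmitted codeword and the channel output, which reduces the whole statement to a single elementary inequality about $1-(1-p)^{N}$, and then to average. First I would use the symmetry of the random--coding ensemble and the uniform prior on messages to assume that message $1$ is sent, and condition on $\bX(1)=\bx$ and $\bY=\by$. Since a tie is counted as an error, a decoding error under $\cD_m$ occurs precisely when $m(\bX(i),\by)\geq m(\bx,\by)$ for some $i\neq 1$. The remaining $2^{nR}-1$ codewords $\bX(2),\dots,\bX(2^{nR})$ are i.i.d.\ $\sim Q_n$ and independent of $(\bx,\by)$, so by \eqref{def:pair_err} each of these events has probability exactly $\pem$ and they are mutually independent. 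Hence the conditional error probability equals $1-(1-\pem)^{2^{nR}-1}$, and averaging over $(\bx,\by)\sim Q_n(\bx)W_n(\by|\bx)$ gives
\begin{align*}
\bar{P}_{e,m,W}\BRA{R}=\E{1-\BRA{1-\pem}^{2^{nR}-1}}.
\end{align*}

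It then suffices to show, pointwise in $(\bx,\by)$, the sandwich
\begin{align*}
\tfrac12\min\BRA{1,2^{nR}\pem}\;\leq\;1-\BRA{1-\pem}^{2^{nR}-1}\;\leq\;\min\BRA{1,2^{nR}\pem},
\end{align*}
after which applying $\E{\cdot}$ and using monotonicity and linearity yields \eqref{eq:union_bound_cliped}. Writing $M=2^{nR}$ and $p=\pem\in[0,1]$, the upper bound is the union bound $1-(1-p)^{M-1}\leq (M-1)p\leq Mp$ together with the trivial bound $1-(1-p)^{M-1}\leq 1$. For the lower bound $1-(1-p)^{M-1}\geq \tfrac12\min(1,Mp)$ I would split into the cases $p\leq 1/M$ and $p>1/M$: in the first case the map $p\mapsto 1-(1-p)^{M-1}-\tfrac12 Mp$ is concave on $[0,1/M]$, vanishes at $p=0$, and is nonnegative at $p=1/M$ because $(1-1/M)^{M-1}\leq \tfrac12$ for every integer $M\geq 2$; in the second case $(1-p)^{M-1}\leq(1-1/M)^{M-1}\leq \tfrac12$ directly.

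The only genuine obstacle is bookkeeping of the constant: matching $1-(1-p)^{M-1}$, which involves $M-1$ competitors, against $\min(1,Mp)$, which involves the full $M=2^{nR}$, so that the clean factor $\tfrac12$ in \eqref{eq:union_bound_cliped} comes out rather than a worse constant. The inequality $(1-1/M)^{M-1}\leq\tfrac12$ for all integers $M\geq 2$ is exactly what makes this work and is the crux of the lower bound. The only edge case is $R=0$ (i.e.\ $M=1$, no competitors and hence no error), which is excluded by $2^{nR}\geq 2$ or else makes both sides of \eqref{eq:union_bound_cliped} trivial.
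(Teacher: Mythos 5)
Your proof is correct, but it follows a different route from the paper. The paper does not prove the lemma from scratch: it attributes the statement to \cite{merhav:universal}, obtains the upper bound from the union bound, and obtains the factor-$\tfrac12$ lower bound by invoking Shulman's lemma \cite[Lemma A.2]{ShulmanPHD}, which states that for \emph{pairwise independent} events the union bound clipped to unity is tight up to a factor of $2$. You instead exploit the \emph{full} independence of the $2^{nR}-1$ competing codewords to write the conditional error probability exactly as $1-(1-\pem)^{2^{nR}-1}$, and then establish the pointwise sandwich $\tfrac12\min(1,Mp)\leq 1-(1-p)^{M-1}\leq\min(1,Mp)$ by elementary means; the key facts you use — concavity of $p\mapsto 1-(1-p)^{M-1}-\tfrac12 Mp$ on $[0,1/M]$ and the inequality $(1-1/M)^{M-1}\leq\tfrac12$ for integers $M\geq 2$ (which holds since $(1-1/M)^{M-1}=\bigl(1+\tfrac{1}{M-1}\bigr)^{-(M-1)}$ decreases from $\tfrac12$ to $e^{-1}$) — all check out. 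What your approach buys is a self-contained, fully elementary proof with an explicit exact intermediate expression; what the paper's approach buys is generality, since Shulman's lemma needs only pairwise independence of the competing events and so extends to ensembles where codewords are merely pairwise independent (e.g., linear codes), where your exact product formula would not be available. Your remark about the degenerate case $2^{nR}=1$ is also apt: the lower bound genuinely requires at least one competing codeword.
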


This lemma is given in \cite{merhav:universal}. The upper bound follows from the union bound. The lower bound follows from a lemma proved by Shulman in \cite[Lemma A.2]{ShulmanPHD}; namely, that the union clipped to unity is tight up to factor 2 (for pairwise independent events).


\begin{definition}
We say that the metric $u_n$ \textbf{dominates} $m_n$ if for all $\bx,\by$:
\begin{equation}\label{def:dominate}
\pe{u_n} \leq \pe{m_n} \cdot 2^{n \cdot \lambda_n}
\end{equation}
with with $\lambda_n \toinf 0$.
\end{definition}

The following lemma, which also appears in \cite{merhav:universal}, follows from lemma \eqref{lem:comm_total_error}:
\begin{lemma} \label{lem:preformence_compre}
\begin{align} \label{preformence_compre}
\bar{P}_{e,u_n,W}\BRA{R} \leq 2\cdot 2^{n \cdot \lambda_n} \cdot \bar{P}_{e,m_n,W}\BRA{R} \\
\bar{P}_{e,u_n,W}\BRA{R} \leq 2\cdot \bar{P}_{e,m_n,W}\BRA{R+\lambda_n}.
\end{align}
In particular, if metric $u_n$ dominates $m_n$ then the average error exponent when using the metric $u_n$ is not worse than using the metric $m_n$.
\end{lemma}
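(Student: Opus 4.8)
The plan is to derive both inequalities directly from Lemma~\ref{lem:comm_total_error}, using the domination hypothesis \eqref{def:dominate} to relate the clipped-union quantities of the two metrics. Recall that Lemma~\ref{lem:comm_total_error} sandwiches $\bar{P}_{e,m,W}(R)$ between $\tfrac12 \E{\min(1, 2^{nR}\cdot \pe{m})}$ and $\E{\min(1, 2^{nR}\cdot \pe{m})}$. So I would first write
\begin{align*}
\bar{P}_{e,u_n,W}\BRA{R} \leq \E{\min\BRA{1, 2^{nR}\cdot \pe{u_n}}},
\end{align*}
then use $\pe{u_n} \leq \pe{m_n}\cdot 2^{n\lambda_n}$ pointwise (valid for every $\bx,\by$, hence preserved under the expectation and under $\min$ with $1$, since $t\mapsto \min(1,t)$ is nondecreasing) to bound this by $\E{\min\BRA{1, 2^{nR}\cdot 2^{n\lambda_n}\cdot \pe{m_n}}}$. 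Finally, apply the lower bound in Lemma~\ref{lem:comm_total_error} to $m_n$ to pass back to $\bar{P}_{e,m_n,W}$.

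The only subtlety is how the extra factor $2^{n\lambda_n}$ is absorbed, and this is exactly what distinguishes the two displayed inequalities. For the first inequality, pull the factor out of the clipped union: since $\min(1, ab) \leq a\cdot\min(1,b)$ for $a\geq 1$, $b\geq 0$ (check: if $b\geq 1$ both sides are $\geq$ comparisons with $a$; if $ab\leq 1$ it's $ab\leq a\,b$; if $b<1\leq ab$ it reads $1\leq a\cdot b$, true), we get $\E{\min(1,2^{nR}2^{n\lambda_n}\pe{m_n})} \leq 2^{n\lambda_n}\E{\min(1,2^{nR}\pe{m_n})} \leq 2^{n\lambda_n}\cdot 2\bar{P}_{e,m_n,W}(R)$, giving the claimed bound with the constant $2$. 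For the second inequality, instead of extracting the factor, absorb it into the rate: $2^{nR}\cdot 2^{n\lambda_n} = 2^{n(R+\lambda_n)}$, so $\E{\min(1,2^{n(R+\lambda_n)}\pe{m_n})} \leq 2\bar{P}_{e,m_n,W}(R+\lambda_n)$ again by the lower bound of Lemma~\ref{lem:comm_total_error} applied at rate $R+\lambda_n$. Both chains are three lines each.

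For the final sentence about error exponents: taking $\tfrac1n\log$ of the first inequality gives
\begin{align*}
\frac1n\log \bar{P}_{e,u_n,W}\BRA{R} \leq \frac1n + \lambda_n + \frac1n\log\bar{P}_{e,m_n,W}\BRA{R},
\end{align*}
and since $\tfrac1n\to 0$ and $\lambda_n\to 0$, the $\liminf$ (equivalently the exponent, up to sign conventions) of the left side is at least that of the right side — i.e., $u_n$ loses nothing in the exponent. I would note this matches the definition of optimality given earlier, with $\lambda_n$ playing the role of the vanishing rate-loss sequence.

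Honestly, there is no real obstacle here: the lemma is essentially a one-step corollary of Lemma~\ref{lem:comm_total_error} plus the elementary monotonicity facts $\min(1,\cdot)$ is nondecreasing and $\min(1,ab)\leq a\min(1,b)$ for $a\geq1$. The only thing to be careful about is keeping the factor-$2$ bookkeeping straight (the lower bound in Lemma~\ref{lem:comm_total_error} costs a factor $2$, which is why both conclusions carry a leading $2$ and not a $1$), and making sure the pointwise domination inequality is applied \emph{before} taking expectations so that no independence or measurability issue arises.
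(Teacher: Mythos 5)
Your proof is correct and follows exactly the route the paper intends: it sandwiches both error probabilities via Lemma~\ref{lem:comm_total_error} and applies the pointwise domination inequality inside the clipped union, which is precisely what the paper means when it says the lemma ``follows from'' Lemma~\ref{lem:comm_total_error} (the paper itself gives no further detail). The only implicit assumption worth flagging is that the step $\min(1,ab)\leq a\min(1,b)$ requires $a=2^{n\lambda_n}\geq 1$, i.e.\ $\lambda_n\geq 0$, which is harmless since a negative $\lambda_n$ can always be replaced by $\max(\lambda_n,0)$.
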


Obviously, an optimal universal decoding metric is a metric that dominates all the metrics in the family \eqref{def:met_familty}.


\section{ The Generalized minimum error test} \label{sec:dec_comp}

In this section we propose our universal decoder, the \textbf{Generalized Minimum Error Test (GMET)}. The decoder estimates the metric that minimizes the pairwise error probability.
\begin{definition}
For the given family of metrics \eqref{def:met_familty}, let:
\begin{equation}\label{def:U}
2^{-n\cdot U_n(\bx,\by)} \triangleq \min_{\theta \in \Theta_n} \pe{m_{\theta}}.
\end{equation}
$U_n(\bx,\by)$ is the GMET. Let:
\begin{equation}\label{def:redundancy}
K_n(\Theta_n) \triangleq \MAX{\by} \; \EQn{2^{n\cdot U_n(\bX,\by)}}.
\end{equation}
$K_n(\Theta_n)$ is termed the \textbf{redundancy} associated with the decoder.
\end{definition}

The performance degradation of the proposed universal decoder relative to any metric $m_{\theta}(\bx,\by)$ in the family can be measured through the redundancy as demonstrated in the following Theorem.

\begin{theorem} \label{theorem:universal_decoder}
For any $\theta \in \Theta_n$
\begin{equation}\label{eq:U_dominate}
\pe{U_n} \leq \pe{m_{\theta}} \cdot K_n(\Theta_n).
\end{equation}
In particular, if $K_n(\Theta_n)$ is sub-exponential then the universal decoder $U_n(\bx,\by)$ is optimal.
\end{theorem}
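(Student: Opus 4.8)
The plan is to bound $\pe{U_n}$ directly by a Markov-type (change-of-measure) inequality, using the fact that $2^{-nU_n(\bx,\by)}$ is, by construction in \eqref{def:U}, the pointwise minimum over $\theta$ of the pairwise errors $\pe{m_\theta}$. First, I would fix $\bx$ and $\by$ and apply the definition \eqref{def:pair_err} to the metric $U_n$, so that $\pe{U_n}=Q_n\BRA{U_n(\bX,\by)\ge U_n(\bx,\by)}$, where $\bX\sim Q_n$ is a single fresh codeword. Since $t\mapsto 2^{-nt}$ is decreasing, the event $\{U_n(\bX,\by)\ge U_n(\bx,\by)\}$ is the same as $\{2^{-nU_n(\bX,\by)}\le 2^{-nU_n(\bx,\by)}\}$, and on this event $2^{nU_n(\bX,\by)}\cdot 2^{-nU_n(\bx,\by)}\ge 1$. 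Hence the indicator of the event is dominated pointwise by $2^{nU_n(\bX,\by)}\cdot 2^{-nU_n(\bx,\by)}$, which is nonnegative, so the bound holds (trivially) off the event too.

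Next I would take the $Q_n$-expectation of this pointwise bound, which gives
\[
\pe{U_n}\;\le\;2^{-nU_n(\bx,\by)}\cdot\EQn{2^{nU_n(\bX,\by)}}\;\le\;2^{-nU_n(\bx,\by)}\cdot K_n(\Theta_n),
\]
the second inequality being exactly the definition \eqref{def:redundancy} of $K_n(\Theta_n)$ as the maximum over $\by$ of that expectation. Finally, by \eqref{def:U}, $2^{-nU_n(\bx,\by)}=\min_{\theta\in\Theta_n}\pe{m_\theta}\le\pe{m_\theta}$ for every $\theta\in\Theta_n$; substituting this into the last display yields \eqref{eq:U_dominate}.

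For the ``in particular'' part: if $K_n(\Theta_n)$ is subexponential, write $K_n(\Theta_n)=2^{n\lambda_n}$ with $\lambda_n\toinf 0$. Then \eqref{eq:U_dominate} says precisely that $U_n$ dominates every $m_\theta$ in the sense of \eqref{def:dominate}, with a single sequence $\lambda_n$ that does not depend on $\theta$. Fixing a channel $W$ and a rate $R$, I would choose for each $n$ an index $\theta_n^{*}$ achieving $\bar P_{e,\cM_n,W}(R)$ in \eqref{def:met_error_probability}; since $U_n$ dominates the sequence $m_{\theta_n^{*}}$, Lemma \ref{lem:preformence_compre} gives $\bar P_{e,U_n,W}(R)\le 2\cdot 2^{n\lambda_n}\,\bar P_{e,m_{\theta_n^{*}},W}(R)=2K_n(\Theta_n)\,\bar P_{e,\cM_n,W}(R)$, and taking $\tfrac1n\log(\cdot)$ and letting $n\to\infty$ drives the exponent of the ratio to at most $\tfrac1n+\lambda_n\to 0$, which is the definition of optimality.

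I do not expect a genuine obstacle: once the right quantity is identified, the core step is a single Chernoff/Markov estimate, and the ``in particular'' clause is bookkeeping with Lemma \ref{lem:preformence_compre}. The only points needing care are the convention --- already built into \eqref{def:pair_err} --- that ties count as errors, which is why the comparison event uses $\ge$ and keeps the estimate consistent, and the degenerate case $\min_\theta\pe{m_\theta}=0$ (so $U_n(\bx,\by)=+\infty$), which forces $\pe{U_n}=0$ as well and can be dismissed in one line; note moreover that whenever $Q_n(\bx)>0$ this case does not arise, since $\bX=\bx$ always lies in the event defining $\pe{m_\theta}$ and hence $\pe{m_\theta}\ge Q_n(\bx)>0$.
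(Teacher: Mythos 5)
Your proposal is correct and follows essentially the same route as the paper's proof: Markov's inequality applied to $2^{n U_n(\bX,\by)}$, the definition of $K_n(\Theta_n)$ as the worst-case expectation over $\by$, the identity $2^{-nU_n(\bx,\by)}=\min_\theta \pe{m_{\theta}}$, and Lemma \ref{lem:preformence_compre} for the optimality clause. Your additional remarks on the tie convention and the degenerate case $U_n(\bx,\by)=+\infty$ are sound but not needed beyond what the paper's one-line Markov step already covers.
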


\ifISITpaperFull
\begin{proof}
\begin{align*}
  \pe{U_n} &=                    Q_n\BRA{U_n(\bX,\by) \geq U_n(\bx,\by)} \\
           &=                    Q_n\BRA{2^{n\cdot U_n(\bX,\by)} \geq 2^{n\cdot U_n(\bx,\by)}} \\
           &\overset{(a)}{\leq} \EQn{2^{n\cdot U_n(\bX,\by)}} \cdot 2^{-n\cdot U_n(\bx,\by)} \\
           &\overset{(b)}{\leq}  \pe{m_{\theta}} \cdot K_n(\Theta_n) \\
\end{align*}
where (a) follows from Markov's inequality and (b) from the definition of $K_n(\Theta_n)$ \eqref{def:redundancy} and $U_n(\bx,\by)$ \eqref{def:U}.
If $K_n(\Theta_n)$ is sub-exponential then $U_n(\bx,\by)$ dominates every metric $m_{\theta}$. By Lemma \ref{lem:preformence_compre} the universal decoder is optimal.
%
\end{proof}
\else
  \begin{proof}
  The proof appears in \cite{ElkayamUniversal} and omitted here due to space limitation.
  \end{proof}
\fi

Theorem \ref{theorem:universal_decoder} provides a quite general construction of a universal decoder. However, bounding $K_n(\Theta_n)$ might not be an easy task.

\subsection{Conditions for universal decoding}
Obviously, for a family of a \emph{single} metric there must be a universal decoder, \ie, the same single metric or an equivalent metric. By analyzing this simple case we get a core understanding of when the proposed GMET decoder is optimal.

\begin{definition}
For a given matric $m(\bx,\by)$ let:
\begin{equation}\label{def:canonic_metric}
\bar{m}(\bx,\by) \triangleq -\frac{1}{n}\log\BRA{\pem}.
\end{equation}
$\bar{m}(\bx,\by)$ is called \textbf{the canonical metric} associated with $m(\bx,\by)$.
\end{definition}
The canonical metric $\bar{m}$, which is equivalent to the original metric $m$ as it induces the same order of the candidate words, is the proposed universal decoder for the family of a single metric $m(\bx,\by)$. The next definition captures the cases where this universal metric is optimal in the sense of Theorem \ref{theorem:universal_decoder}.
\begin{definition} \label{def:normal} \mynewline
\begin{enumerate}
  \item The metric $m(\bx,\by)$ over the prior $Q_n(\bx)$ is \textbf{normal} if
  \begin{equation}\label{def:NormalMetric}
    \EQn{2^{n\cdot \bar{m}(\bX,\by)}} \leq 2^{n\cdot\lambda_n}
  \end{equation}
  uniformly over $\by$, where $\lambda_n \toinf 0$.
  \item The prior $Q_n(\bx)$ is normal if each metric over $Q_n(\bx)$ is normal (uniformly over all the metrics, \ie, the bound in \eqref{def:NormalMetric} is independent of the metric).
\end{enumerate}
\end{definition}
Normal metrics are exactly those metrics that might admit an optimal universal decoder.

The following lemma provides a sufficient condition for a given discrete prior $Q_n(\bx)$ to be normal. The condition is very mild and easy to check.
\begin{lemma}
Let $\chi_n = -\log \BRA{ \min_{\bx : Q_n(\bx) \neq 0}Q_n(\bx)}$. If $\chi_n$ is sub-exponential, then $Q_n(\bx)$ is normal.
\end{lemma}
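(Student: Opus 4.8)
The plan is to unwind the definition of a normal prior and reduce the claim to an elementary telescoping estimate. Since $2^{n\cdot\bar m(\bx,\by)} = 1/\pem$ by \eqref{def:canonic_metric}, the quantity to control is, for each fixed $\by$ and each metric $m$,
\[
\EQn{2^{n\cdot\bar m(\bX,\by)}} \;=\; \sum_{\bx:\,Q_n(\bx)\neq 0} \frac{Q_n(\bx)}{\pem},
\]
and the goal is to bound this by some $2^{n\lambda_n}$ with $\lambda_n\toinf 0$, uniformly in $\by$ and in $m$ (the latter uniformity is what is needed for $Q_n$ itself to be normal). In fact I expect to show the sum is at most $1+\chi_n$, which is subexponential precisely when $\chi_n$ is.

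First I would record the crude bound: by the tie convention in \eqref{def:pair_err}, $\bx$ is always counted in its own $\pem$, so $\pem \ge Q_n(\bx) \ge 2^{-\chi_n}$, whence $\sum_\bx Q_n(\bx)/\pem \le 2^{\chi_n}$. This is too weak — $2^{\chi_n}$ need not be subexponential when $\chi_n$ is (e.g.\ a uniform prior on a set of exponential size makes $\chi_n$ linear in $n$, so $2^{\chi_n}$ is exponential) — so the real task is to sharpen $2^{\chi_n}$ down to $O(\chi_n)$. To do this, fix $\by$ and list the support points of $Q_n$ in order of decreasing metric value $m(\cdot,\by)$, grouping together points with equal metric value (such points share the same value of $\pem$). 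If $a_k$ is the total $Q_n$-mass of the $k$-th group and $S_k = \sum_{j\le k}a_j$ its partial sum, then $\pem = S_k$ for every $\bx$ in group $k$, so the sum becomes $\sum_{k\ge1} a_k/S_k = \sum_{k\ge1}(S_k - S_{k-1})/S_k$ with $S_0 := 0$.

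The core step would be the elementary inequality $1 - t \le \log_2(1/t)$ for $t\in(0,1]$ applied with $t = S_{k-1}/S_k$: it gives $(S_k - S_{k-1})/S_k \le \log_2(S_k/S_{k-1})$ for $k\ge2$, while the $k=1$ term equals exactly $1$. Summing, the logarithms telescope, leaving
\[
\sum_{k\ge1}\frac{a_k}{S_k} \;\le\; 1 + \log_2\frac{S_N}{S_1} \;=\; 1 + \log_2\frac{1}{a_1} \;\le\; 1+\chi_n,
\]
because the total mass is $S_N = 1$ and $a_1 = Q_n(\bx_1) \ge \min_{\bx:\,Q_n(\bx)\neq0}Q_n(\bx) = 2^{-\chi_n}$. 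Since the resulting bound $1+\chi_n$ depends on neither $\by$ nor $m$, Definition \ref{def:normal} is met with $2^{n\lambda_n} = 1+\chi_n$ and $\lambda_n = \tfrac1n\log_2(1+\chi_n)\toinf 0$, so $Q_n$ is normal. I do not expect a genuine obstacle: the only care needed is the bookkeeping with ties (handled by grouping), and the one conceptual point is noticing that the naive bound $2^{\chi_n}$ is not good enough and that the telescoping refinement is exactly what pulls the estimate into the subexponential range.
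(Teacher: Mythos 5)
Your proof is correct, but it takes a genuinely different route from the paper's. The paper proves this lemma via the appendix (Proposition \ref{prop:normal_prior}): it first abstracts $\bar m(\cdot,\by)$ into the notion of a \emph{rate function} $R$ satisfying $\PR{R(\bX)\geq t}\leq 2^{-nt}$, shows that any rate function on a discrete prior is bounded above by $\chi_n/n$, and then bounds $\EQn{2^{nR(\bX)}}$ using the identity $\E{Z}=\int_0^{\infty}\PR{Z\geq z}\,dz$ together with the tail bound, which reduces to $\int_1^{2^{\chi_n}}\frac{dt}{t}$ and yields $1+\ln(2)\cdot\chi_n$. Your argument is the discrete counterpart of that same $\int dt/t$ computation: you decompose $\sum_{\bx}Q_n(\bx)/\pem$ over the level sets of $m(\cdot,\by)$, observe $\pem=S_k$ on the $k$-th level set, and telescope via $1-t\leq\log_2(1/t)$ to get $1+\log_2(1/a_1)\leq 1+\chi_n$. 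Both bounds are subexponential exactly when $\chi_n$ is, and your handling of ties, of the uniformity in $\by$ and in $m$, and of the observation that the naive bound $2^{\chi_n}$ is insufficient are all sound. What the paper's abstraction buys is reusability: Proposition \ref{prop:normal_prior}(a) applies to \emph{any} rate function with \emph{any} upper bound $B_n$, not just to canonical metrics on discrete priors, and the rate-function machinery is used elsewhere (e.g., Proposition \ref{prop:asymptotic_rate_func_cond}). What your version buys is self-containment: it needs no auxiliary definitions and no integral representation of the expectation, only an elementary telescoping inequality, and it makes transparent exactly where the logarithmic (rather than exponential) dependence on $1/\min_{\bx}Q_n(\bx)$ comes from.
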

\begin{proof}
The proof is given in the appendix (Proposition \ref{prop:normal_prior}) where we also introduce the concept of ``rate function'' and relate it to the canonical metric.
\end{proof}

\subsection{The Merged decoder}
If the family of metrics \eqref{def:met_familty} is finite, we can "merge" all the metrics and get a bound on $K_n(\Theta_n)$. If the family size grows sub-exponentially over the normal prior, then the family admits a universal decoder.

\begin{theorem}
Suppose $Q_n(\bx)$ is normal and $\Theta_n$ is finite, then:
\begin{enumerate}
  \item $K_n(\Theta_n) \leq |\Theta_n|\cdot 2^{o(n)}$
  \item If $|\Theta_n|$ grows sub-exponentially, then the family $\Theta_n$ admits an optimal universal decoder.
\end{enumerate}
\end{theorem}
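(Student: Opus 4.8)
The plan is to unwind the definition of the GMET $U_n$ in terms of the canonical metrics and then replace a maximum by a sum. First I would observe that, combining \eqref{def:U} with the definition of the canonical metric \eqref{def:canonic_metric}, we have $\pe{m_{\theta}} = 2^{-n\bar m_{\theta}(\bx,\by)}$, so
\[
2^{-n\cdot U_n(\bx,\by)} = \min_{\theta \in \Theta_n}\pe{m_{\theta}} = \min_{\theta \in \Theta_n} 2^{-n\cdot \bar m_{\theta}(\bx,\by)},
\]
and hence $U_n(\bx,\by) = \max_{\theta \in \Theta_n}\bar m_{\theta}(\bx,\by)$. Exponentiating and using that $t \mapsto 2^{t}$ is increasing,
\[
2^{n\cdot U_n(\bX,\by)} = \max_{\theta \in \Theta_n} 2^{n\cdot \bar m_{\theta}(\bX,\by)} \leq \sum_{\theta \in \Theta_n} 2^{n\cdot \bar m_{\theta}(\bX,\by)}.
\]

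Next I would take the expectation with respect to $Q_n(\bx)$ and invoke the hypothesis that $Q_n(\bx)$ is normal. By linearity of expectation and item (2) of Definition \ref{def:normal}, since the bound \eqref{def:NormalMetric} holds with the \emph{same} sequence $\lambda_n \toinf 0$ for every metric and uniformly in $\by$,
\[
\EQn{2^{n\cdot U_n(\bX,\by)}} \leq \sum_{\theta \in \Theta_n}\EQn{2^{n\cdot \bar m_{\theta}(\bX,\by)}} \leq |\Theta_n|\cdot 2^{n\cdot\lambda_n}.
\]
Taking the maximum over $\by$ and recalling \eqref{def:redundancy} gives $K_n(\Theta_n) \leq |\Theta_n|\cdot 2^{n\lambda_n} = |\Theta_n|\cdot 2^{o(n)}$, which is claim (1). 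For claim (2): if $|\Theta_n|$ grows sub-exponentially then $|\Theta_n| = 2^{o(n)}$, so $K_n(\Theta_n) \leq 2^{o(n)}\cdot 2^{o(n)} = 2^{o(n)}$ is sub-exponential; Theorem \ref{theorem:universal_decoder} then says that $U_n(\bx,\by)$ dominates every $m_{\theta}$ and is optimal, so the family admits an optimal universal decoder, namely $U_n$.

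Every step is elementary, so there is no genuine obstacle; the one point needing care is the quantifier in Definition \ref{def:normal}. We must use that normality of the prior means the constant $\lambda_n$ can be chosen independently of the metric, so that summing the $|\Theta_n|$ canonical exponential moments costs only the multiplicative factor $|\Theta_n|$ and no extra exponential term. If one only had normality of each metric separately, with a metric-dependent $\lambda_n^{(\theta)}$, the bound would not close. It is also worth noting that the inequality $\max_\theta \le \sum_\theta$ used above is exactly the ``merging'' operation alluded to in the section title, and it is lossless at the exponential scale precisely because $|\Theta_n|$ is sub-exponential.
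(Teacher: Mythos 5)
Your proof is correct and follows essentially the same route as the paper's: rewrite $2^{n U_n}$ as a maximum of the reciprocals $1/\pe{m_\theta} = 2^{n\bar m_\theta}$, bound the maximum by the sum, and apply normality of the prior uniformly over $\theta$ and $\by$. If anything, your version is slightly more careful than the paper's, which attributes the final bound to Theorem \ref{theorem:universal_decoder} when it really rests on the normality condition \eqref{def:NormalMetric} with a metric-independent $\lambda_n$ — the quantifier issue you correctly flag.
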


\ifISITpaperFull

\begin{proof}
For each $\by$:
\begin{align*}
\EQn{2^{n\cdot U_n(\bX,\by)}} &= \EQn{\frac{1}{\min_{\theta \in \Theta_n } \pe{m_{\theta}} }} \\
&\overset{(a)}{\leq} \sum_{\theta \in \Theta_n } \EQn{\frac{1}{\pe{m_{\theta}}}} \\
&\overset{(b)}{\leq} |\Theta_n| \cdot 2^{o(n)}
\end{align*}
(a) follows by taking $\max$ over all $\by$ and (b) from Theorem \eqref{theorem:universal_decoder}.
\end{proof}
\else
\fi

\subsection{Approximation of the universal decoder}
As noted also in \cite{merhav:universal}, the proof actually gives conditions for a universal decoding metric to be optimal. Suppose that there exists another universal metric $U_n'(\bx,\by)$ such that:
\begin{equation}\label{Approximate:App}
2^{-n\cdot U_n'(\bx,\by)} \leq 2^{-n\cdot U_n(\bx,\by)+o(n)}
\end{equation}
and
\begin{equation}\label{Approximate:Kraft}
\EQn{2^{n\cdot U_n'(\bx,\by)}} \leq 2^{o(n)}.
\end{equation}
Then it follows that:

\begin{align*}
  \pe{U_n'} &=                    Q_n\BRA{U_n'(\bX,\by) \geq U_n'(\bx,\by)} \\
            &\overset{(a)}{\leq} \EQn{2^{n\cdot U_n'(\bX,\by)}} \cdot 2^{-n\cdot U_n'(\bx,\by)} \\
            &\overset{(b)}{\leq} 2^{o(n)} \cdot 2^{-n\cdot U_n(\bx,\by)} \\
            &\overset{(c)}{\leq}  \pe{m_{\theta}} \cdot K_n(\Theta_n) \cdot 2^{o(n)}\\
\end{align*}
where (a) follows from Markov's inequality, (b) is the given conditions, and (c) is Theorem \ref{theorem:universal_decoder}. In particular, if $K_n(\Theta_n)$ is subexponential, then $U_n'(\bX,\by)$ is also optimal.
Now,
\begin{align*}
  \pe{m_{\theta}} &=    Q\BRA{m_{\theta}(\bX,\by) \geq m_{\theta}(\bx,\by)} \\
                  &\geq Q\BRA{m_{\theta}(\bX,\by)   =  m_{\theta}(\bx,\by)} \\
                  &\geq Q\BRA{\forall \tau\in\Theta_n, m_{\tau}(\bX,\by) =  m_{\tau}(\bx,\by)} \\
\end{align*}
So there are several general approximations that might be used instead of the original proposed decoder, \ie:
\begin{equation}\label{Decoder:App1}
2^{-n \cdot U_{n,1}(\bx,\by)} = \MIN{\theta \in \Theta_n}Q\BRA{m_{\theta}(\bX,\by)   =  m_{\theta}(\bx,\by)}.
\end{equation}
and
\begin{equation}\label{Decoder:App2}
2^{-n \cdot U_{n,2}(\bx,\by)} = Q\BRA{\forall \tau\in\Theta_n, m_{\tau}(\bX,\by)   =  m_{\tau}(\bx,\by)},
\end{equation}
where in the definition of $U_{n,2}(\bx,\by)$ we can avoid the minimization over $\theta$ as it is effectively done in the probability computation. For each of these decoders the condition \eqref{Approximate:Kraft} should be checked and universality(optimal) of the approximation implies universality of the GMET decoder, but not vice versa (\cf\ example \ref{Example:FSC}). Notice that $U_{n,2}(\bx,\by)$ is the decoder proposed by Merhav in \cite{merhav:universal}. These approximations might be useful when computation of the exact GMET is hard and these approximation are easier to calculate.

\ifISITpaperFull

\begin{remark}
The decoder $U_{n,2}(\bx,\by)$ has a nice property of being a ``asymptotically tight rate function'' (when it is optimal) in the following sense:
The use of the Markov's inequality gives the upper bound on $\pe{U_{n,2}}$:
\begin{align*}
  \pe{U_{n,2}} \leq 2^{-n\cdot U_{n,2}(\bx,\by)} \cdot 2^{o(n)}.
\end{align*}
However, for $U_{n,2}$ it is also true that $\pe{U_{n,2}} \geq 2^{-n\cdot U_{n,2}(\bx,\by)}$. To see this, recall from \cite[Eq. (4),(5)]{merhav:universal}:
\begin{equation*}
\mathcal{T}_n(\bx|\by) = \BRAs{ \bx' \in \cX^n : \forall \theta \in \Theta_n, m_{\theta}(\bx',\by) = m_{\theta}(\bx,\by)}
\end{equation*}
and
\begin{equation*}
2^{-n\cdot U_{n,2}(\bx,\by)} = Q_n\BRA{\mathcal{T}_n(\bx|\by)}.
\end{equation*}
Then:
\begin{align*}
  \pe{U_{n,2}}  &=     \sum_{\bx' : U_{n,2}(\bx',\by) \geq U_{n,2}(\bx,\by) } Q_n(\bx') \\
                &\geq  \sum_{\bx' : U_{n,2}(\bx',\by)   =  U_{n,2}(\bx,\by) } Q_n(\bx') \\
                &\overset{(a)}{\geq}  \sum_{\bx' \in \mathcal{T}_n(\bx|\by) } Q_n(\bx') \\
                &= Q_n\BRA{\mathcal{T}_n(\bx|\by)} \\
                &= 2^{-n \cdot U_{n,2}(\bx,\by)}
\end{align*}
where (a) follows because $U_{n,2}(\bx,\by) = U_{n,2}(\bx',\by)$ for $\bx' \in \mathcal{T}_n(\bx|\by)$.
Putting these together we have:
\begin{equation}\label{App:Tight}
-\frac{1}{n}\log\BRA{\pe{U_{n,2}}} = U_{n,2}(\bx,\by)+o(1)
\end{equation}
So we see that the metric $U_{n,2}$ is almost in the canonical form (up to a vanishing term), which means that we can use it to evaluate the performance of the decoder (instead of using $\pe{U_{n,2}}$). It is not apparent that $U_n$ in general has the ``asymptotically tight'' property of $U_{n,2}$. This means that $U_n(\bx,\by)$ as a rate function provides only the lower bound on performance, and it might be that the performance is better in the sense that $-\frac{1}{n}\log\BRA{\pe{U_n}}$ might be larger than $U_n(\bx,\by)$ by a non-vanishing term.
\end{remark}

\else
\fi

\section{Examples}
\subsection{Normal priors examples}
\begin{example}
  Let $Q_n(\bx^n)$ be an i.i.d. probability distribution function, namely, $Q_n(\bx^n) = \prod_{i=1}^n Q(\bx_i)$. Obviously, $$\min_{\bx : Q_n(\bx^n) \neq 0}Q_n(\bx^n) = \BRA{\min_{\bx:Q(\bx)\neq0}Q(\bx)}^n$$ and $$-\log\BRA{\min_{\bx:Q_n(\bx^n)\neq0}Q_n(\bx^n)} = n \cdot \log\BRA{\min_{\bx:Q(\bx)\neq0}Q(\bx)},$$ which is subexponential (actually, linear in $n$).
\end{example}
\begin{example}
  Let $Q_n(\bx)$ be uniform over some set $B_n \subset \cX^n$. If $|\cX|$ is finite, then $|B_n|\leq |\cX|^n$ and $-\log\BRA{|B_n|^{-1}} \leq -\log\BRA{|\cX|^{-n}} = n\cdot\log\BRA{|\cX|}$, which is subexponential. Hence, $Q_n(\bx)$ is normal.
\end{example}

\subsection{DMC and Constant type metrics}

It is instructive to validate the fact that the family of metrics for DMCs, and more generally, metrics that are constant on types admits universal decoding over any normal prior. To see this notice that at first glance the metric family $\Theta_n$ is infinite, but since only the order on the input space induced by the metric matters, there are only a finite number of metrics that the minimum in \eqref{def:U} is achieved on. Moreover, the minimum in \eqref{def:U} occurs on some specific type and there is a polynomial number of types, so the set of metrics that receives the minimum of each type dominates the minimization and we can merge only a polynomial number of metrics to get the universal metric, which then follows to be optimal.

\subsection{Finite state metrics} \label{Example:FSC}

In this section we define a family of metrics that can be calculated using a finite state machine. This family was used in \cite{ziv1985universal}, where it was proved that the code length of the conditional Lempel-Ziv algorithm can be used as a universal decoding metric for finite state channels. For each $n$, let $\cS_n$ be the state space with $|\cS_n|$ states. A state machine is defined by the next state function $g: \cX \times \cY \times \cS_n: \rightarrow \cS_n$, and $q: \cX \times \cY \times \cS_n: \rightarrow \mathbb{R}$ the output function. The first state is $s_0$. The metric $m(\bx^n,\by^n)$ is computed as follows: Let $s_i=g(s_{i-1},x_i,y_i)$ for $i=1...n$, Then: $m(\bx^n,\by^n)=\sum_{i=1}^n q(s_i,x_i,y_i)$.

The number of next state functions is: $|\cS_n|^{|\cS_n|\cdot|\cX|\cdot|\cY|}$. Once the next state is given, only the type ($\bx,\by$ and the state) determines the metric value. In particular, the number of metrics that minimizes the pairwise error in \eqref{def:U} can be bounded by $B_n = |\cS_n|^{|\cS_n|\cdot|\cX|\cdot|\cY|}\cdot \BRA{n+1}^{|\cX|\cdot|\cY|\cdot|\cS_n|}$. Let $|\cS_n|=n^{\alpha}$. It is easy to see that $\limtoinf \frac{1}{n} \log \BRA{B_n} = 0$ for $\alpha < 1$ and $\limtoinf \frac{1}{n} \log \BRA{B_n} > 0 $ for $\alpha \geq 1$, so when $\alpha < 1$ this family admits an optimal universal decoder.

When $|\cS_n|=n$, \ie $\alpha=1$, there exists a next state function such that the state in every different time is different, and in particular, given the sent and received words, we can construct a metric that gives maximal value to these two words and smaller metric value to other codewords. This means that the minimum in \eqref{def:U} is always $Q(\bx)$ and our decoder choose the codeword according to the a priori information, independently of $\by$. Clearly, this decoder is useless and not optimal.

Notice that metrics of different ``next state function'' admits no ties in general. Thus the type based universal decoder of \cite{merhav:universal} is not optimal, while our decoder, described above, is optimal for this family of finite state metrics with $\alpha < 1$.

\ifISITpaperFull

\subsection{$k^{th}$ order Markov metrics}

A $k^{th}$ order Markov metric is a special case of a finite state metric where the state contains the last $k$ samples of $\bx,\by$. This class of metrics is interesting as it can be used to approximate several other interesting metrics, \eg, metrics of stationary ergodic channels and finite state channels with hidden channel state. It is interesting to examine the possible Markov order (as a function of $n$) that allows universal decoding. Denote the order by $k_n$. The number of states is then $|\cS_n|=\BRA{|\cX|\cdot|\cY|}^{k_n}$. Universal decoding is possible as long as $|\cS_n|=n^{\alpha}, \alpha<1$. Taking the $\log$ of both sides, we have:
$k_n < \frac{\log(n)}{\log\BRA{|\cX|\cdot|\cY|}}$.

\else
\fi

\section{Summary and conclusions}

A general universal decoder for a family of decoding metrics is described. This decoder generalizes several known universal decoders. The decoder, termed Generalized Minimum Error Test, uses the minimum error principle. We would like to explore the usability of the minimum error principle in other areas, including universal source coding with side information. Some further research regarding the infinite (abstract) alphabet is needed. Also, extension of the results here to network cases and to channels with feedback is also investigated in on-going research.

\section{Acknowledgment}
Interesting discussions with Neri Merhav are acknowledged with thanks. Comments by the anonymous referees are greatly appreciated.

\appendices
\section{ Rate functions} \label{App:rate_func}

In order to analyze the performance of a metric decoder, we would need somehow to ``normalize'' the metric in order to remove its redundancy, \ie, the performance depends only on the order induced by the metric and not the specific value of the metric. However, for rate function (which we will define here) the value of the metric provides a hint to the performance of decoding with this metric. Throughout we will refer to a distribution $Q_n(\bx)$ as the prior and we'll assume it to be fixed. Through the whole treatment above, the output sequence $\by$ always held fixed.

\begin{definition} \mynewline

\begin{enumerate}
  \item
  A function $R:\cX_n \rightarrow \mathbb{R} \cup \BRAs{-\infty}$ is called \textbf{rate function} over the prior $Q_n(\bx)$ if:
\begin{align}\label{def:rate_func}
\PR{R(\bX) \geq t} = Q_n(R(\bX) \geq t) \leq 2^{-nt}.
\end{align}
  \item
The function $R$ is an \textbf{asymptotic rate function} if there exists a sequence $\lambda_n$ such that $\lambda_n \toinf 0$ and
\begin{align}\label{def:asym_rate_func}
Q_n(R(\bX) \geq t) \leq 2^{-n(t-\lambda_n)}.
\end{align}
  \item
Given any function $R$ (not necessarily a rate function) define:
\begin{align}\label{def:optimal_rate_func}
\Omega_R(\bx) = -\frac{1}{n}\log\BRA{Q_n(R(\bX) \geq R(\bx))}.
\end{align}
$\Omega_R(\bx)$ is the \textbf{canonical rate function} associated with the order $R$.
\end{enumerate}
\end{definition}

\begin{proposition} \label{prop:rate_func_properties} \mynewline
(a) The function $\Omega_R(\bx)$ preserves the order induced by $R$ on $\bx$. That is, $R(\bx_1) \leq R(\bx_2)$ implies $\Omega_R(\bx_1) \leq \Omega_R(\bx_2)$. If $Q_n(\bx_2) > 0$, then $R(\bx_1) < R(\bx_2)$ implies $\Omega_R(\bx_1) < \Omega_R(\bx_2)$. \\
(b) $\Omega_R(\bx)$ is a rate function. \\
(c) For any rate function $R$ and any $\bx$: $R(\bx) \leq \Omega_R(\bx)$.
\end{proposition}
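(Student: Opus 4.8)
The plan is to prove the three items in the order (c), (a), (b), since the content is minimal: (c) is a one-line application of the defining inequality \eqref{def:rate_func}, (a) is monotonicity of $s\mapsto-\tfrac1n\log s$, and only (b) needs a little bookkeeping. For (c), fix $\bx$; the case $R(\bx)=-\infty$ is trivial because then $\Omega_R(\bx)=-\tfrac1n\log Q_n(R(\bX)\ge-\infty)=-\tfrac1n\log 1=0>-\infty$, so assume $R(\bx)$ finite. Instantiating \eqref{def:rate_func} at $t=R(\bx)$ gives $Q_n(R(\bX)\ge R(\bx))\le 2^{-nR(\bx)}$, and applying $-\tfrac1n\log(\cdot)$ to both sides (which reverses inequalities, with $-\tfrac1n\log 0:=+\infty$) yields $\Omega_R(\bx)\ge R(\bx)$. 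Informally this says $\Omega_R$ is the largest rate function compatible with the order of $R$.

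For (a), if $R(\bx_1)\le R(\bx_2)$ then $\{R(\bX)\ge R(\bx_2)\}\subseteq\{R(\bX)\ge R(\bx_1)\}$, so $Q_n(R(\bX)\ge R(\bx_2))\le Q_n(R(\bX)\ge R(\bx_1))$, and applying the decreasing map $-\tfrac1n\log(\cdot)$ gives $\Omega_R(\bx_1)\le\Omega_R(\bx_2)$. For the strict version, when $R(\bx_1)<R(\bx_2)$ the two super-level sets differ by the set $\{\bx':R(\bx_1)\le R(\bx')<R(\bx_2)\}$, which contains $\bx_1$; under the stated positivity hypothesis this difference has positive $Q_n$-mass (and $\Omega_R(\bx_2)$ is finite, so the statement is not vacuous), hence the probabilities are strictly ordered and, since $-\tfrac1n\log(\cdot)$ is \emph{strictly} decreasing, $\Omega_R(\bx_1)<\Omega_R(\bx_2)$.

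For (b), fix $t\in\mathbb{R}$ and set $A_t=\{\bx:\Omega_R(\bx)\ge t\}$; the goal is $Q_n(A_t)\le 2^{-nt}$, which is exactly \eqref{def:rate_func} for $\Omega_R$. Unfolding \eqref{def:optimal_rate_func}, $\bx\in A_t$ iff $Q_n(R(\bX)\ge R(\bx))\le 2^{-nt}$, and the monotonicity used in (a) makes $A_t$ an up-set for the order induced by $R$: if $\bx\in A_t$ and $R(\bx')\ge R(\bx)$ then $\bx'\in A_t$. The cases $A_t=\emptyset$ and $t\le 0$ are immediate (in the latter $2^{-nt}\ge 1$), so assume otherwise and put $r^\ast=\inf_{\bx\in A_t}R(\bx)$; being an up-set, $A_t$ equals $\{R(\bX)\ge r^\ast\}$ if this infimum is attained at some $\bx^\ast\in A_t$, and $\{R(\bX)>r^\ast\}$ otherwise. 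In the first case $Q_n(A_t)=Q_n(R(\bX)\ge R(\bx^\ast))\le 2^{-nt}$ directly, since $\bx^\ast\in A_t$. In the second, choose $\bx_k\in A_t$ with $R(\bx_k)\downarrow r^\ast$; then the events $\{R(\bX)\ge R(\bx_k)\}$ increase to $A_t$, so continuity of the measure gives $Q_n(A_t)=\lim_k Q_n(R(\bX)\ge R(\bx_k))\le 2^{-nt}$. I expect the only genuine obstacle to be precisely this last point — whether $r^\ast$ is attained — which the continuity-of-measure argument settles and which is also the only place where one must be slightly careful when $\cX_n$ is infinite; everything else follows at once from monotonicity of $-\tfrac1n\log(\cdot)$.
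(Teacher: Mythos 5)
Your overall route is the same as the paper's --- everything reduces to the fact that the survival function $f(t)=Q_n(R(\bX)\ge t)$ is nonincreasing and $-\tfrac{1}{n}\log(\cdot)$ is decreasing --- and your treatment of (b) is in fact more careful than the paper's: the paper only checks the defining inequality at values $t$ of the form $\Omega_R(\bx)$ and disposes of the remaining $t$'s in a sentence, whereas you isolate the one delicate point (whether $\inf_{\bx\in A_t}R(\bx)$ is attained) and settle it by continuity of measure. Part (c) and the non-strict half of (a) are fine.

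The one genuine gap is the strict half of (a). You correctly identify that $Q_n(R(\bX)\ge R(\bx_1))-Q_n(R(\bX)\ge R(\bx_2))=Q_n(R(\bx_1)\le R(\bX)<R(\bx_2))$ and that this difference set contains $\bx_1$, but you then assert that ``the stated positivity hypothesis'' makes its mass positive. The stated hypothesis is $Q_n(\bx_2)>0$, and $\bx_2$ lies in $\{R(\bX)\ge R(\bx_2)\}$, not in the difference set; what your argument actually needs is $Q_n(\bx_1)>0$. Indeed, with the hypothesis as literally written the strict claim is false: take $\cX^n=\{a,b\}$, $Q_n(a)=0$, $Q_n(b)=1$, $R(a)=0<1=R(b)$; then $Q_n(R(\bX)\ge 0)=Q_n(R(\bX)\ge 1)=1$, so $\Omega_R(a)=\Omega_R(b)=0$ even though $Q_n(\bx_2)=Q_n(b)>0$. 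So either the hypothesis should read $Q_n(\bx_1)>0$ (in which case your argument closes immediately, since the difference set contains $\bx_1$) or the strict claim must be weakened. The paper's own proof of (a) is a one-liner that does not address strictness at all, and nothing downstream uses it, but as written your justification of this step does not follow from the hypothesis you cite.
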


\ifISITpaperFull
\begin{proof} \mynewline
(a) Let $f(t) = Q_n(R(\bX) \geq t)$. $f(t)$ is a decreasing function. $\Omega_R(\bx) = -\frac{1}{n}\log(f(R(\bx)))$ and (a) follows. \\
(b) Notice that since $\Omega_R(\bx)$ preserves the order in $R$, then $Q_n(\Omega_R(\bX) \geq \Omega_R(\bx)) = Q_n(R(\bX) \geq R(\bx)) = 2^{-n\cdot \Omega_R(\bx)}$ and the condition \eqref{def:rate_func} is met with equality. For other $t$-s that are not equal to any $\Omega_R(\bx)$ the condition can be easily verified by noting that $Q_n(\Omega_R(\bX) \geq t)=2^{-n\cdot \Omega_R(\bx)}$ for some $\bx$ such that $t < \Omega_R(\bx)$. \\
(c) If $R$ is a rate function then
\begin{align*}
  2^{-n \cdot \Omega_R(\bx)} &= Q_n(\Omega_R(\bX) \geq \Omega_R(\bx)) \\
                             &= Q_n(R(\bX) \geq R(\bx))  \\
                             &\leq 2^{-n \cdot R(\bx)}.
\end{align*}
\end{proof}
\else
\fi

\begin{example}
Let $R(\bx) = \frac{1}{n} \log \BRA{\frac{P_n(\bx)}{Q_n(\bx)}}$ where $P_n(\bx)$ is a probability distribution, then $R$ is a rate function. To see this, notice that by using Markov's inequality
\begin{align*}
  Q_n(R(\bX) \geq t) &= Q_n(2^{n \cdot R(\bX)} \geq 2^{n \cdot t}) \\
  & \leq \EQn{2^{n \cdot R(\bX)}} \cdot 2^{-n \cdot t} \\
  & = \EQn{\frac{P_n(\bx)}{Q_n(\bx)}} \cdot 2^{-n \cdot t} \\
  & = \sum_{\bx} Q_n(\bx) \cdot \frac{P_n(\bx)}{Q_n(\bx)} \cdot 2^{-n \cdot t} = 2^{-n \cdot t}.
\end{align*}
\end{example}

\begin{definition}
We say that the function $R:\cX_n \rightarrow \mathbb{R} \cup \BRAs{-\infty}$ is an \textbf{asymptotically tight rate function} over the prior $Q_n(\bx)$ if there exists a sequence $\lambda_n$ such that $\lambda_n \toinf 0$ and
\begin{align}\label{def:tight_rate_func}
2^{-n(R(\bx)+\lambda_n)} \leq Q_n(R(\bX) \geq R(\bx)) \leq 2^{-n(R(\bx)-\lambda_n)}.
\end{align}
\end{definition}

Notice that for any function $R$, $\Omega_R(\bx)$ is a trivially asymptotically tight rate function (even not asymptotically). The following proposition gives sufficient conditions for any function $R$ to be an asymptotically tight rate function.

\begin{proposition} \label{prop:asymptotic_rate_func_cond}
For any function $R$ we have:
\begin{equation}
2^{-n \cdot \Omega_R(\bx)} \leq \EQn{ 2^{n \cdot R(\bx)} } \cdot 2^{-n \cdot R(\bx)}.
\end{equation}
In particular, if
\begin{equation} \label{eq:e1231}
2^{-n \cdot (R(\bx)+\lambda_n)} \leq 2^{-n \cdot \Omega_R(\bx)}
\end{equation}
and
\begin{equation} \label{eq:e1232}
\EQn{ 2^{n \cdot R(\bx)} } \leq 2^{n \cdot \chi_n}
\end{equation}
with $\lambda_n, \chi_n \toinf 0$, then $R$ is an asymptotically tight rate function and $\EQn{2^{n \cdot \Omega_R(\bX)}}$ is also sub-exponential.
\end{proposition}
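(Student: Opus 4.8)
The plan is to reduce the whole statement to a single application of Markov's inequality followed by elementary bookkeeping of the vanishing exponents; I do not expect any genuine obstacle.

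First I would prove the displayed inequality. By the definition \eqref{def:optimal_rate_func} of $\Omega_R$,
$$2^{-n\cdot\Omega_R(\bx)} = Q_n\BRA{R(\bX)\geq R(\bx)} = Q_n\BRA{2^{n\cdot R(\bX)}\geq 2^{n\cdot R(\bx)}},$$
and since $2^{n\cdot R(\bX)}$ is a nonnegative random variable under $Q_n$, Markov's inequality at the level $2^{n\cdot R(\bx)}$ yields $Q_n\BRA{2^{n\cdot R(\bX)}\geq 2^{n\cdot R(\bx)}}\leq \EQn{2^{n\cdot R(\bX)}}\cdot 2^{-n\cdot R(\bx)}$, which is exactly the claim. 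This is the same computation as in the worked example just above the proposition, now without assuming that $2^{nR}$ is a likelihood ratio.

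Next I would handle the ``in particular'' part. To verify that $R$ satisfies \eqref{def:tight_rate_func} I need to sandwich $Q_n(R(\bX)\geq R(\bx)) = 2^{-n\cdot\Omega_R(\bx)}$. The upper bound is obtained by chaining the inequality just proved with the hypothesis \eqref{eq:e1232}, giving $2^{-n\cdot\Omega_R(\bx)}\leq 2^{n\chi_n}\cdot 2^{-n\cdot R(\bx)} = 2^{-n(R(\bx)-\chi_n)}$; the lower bound is merely \eqref{eq:e1231} rewritten as $2^{-n\cdot\Omega_R(\bx)}\geq 2^{-n(R(\bx)+\lambda_n)}$. Taking $\mu_n = \max(\lambda_n,\chi_n)$, which still tends to $0$, these two bounds are precisely \eqref{def:tight_rate_func}, so $R$ is an asymptotically tight rate function.

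For the last assertion, taking $-\frac1n\log$ of \eqref{eq:e1231} gives the pointwise bound $\Omega_R(\bx)\leq R(\bx)+\lambda_n$, hence $2^{n\cdot\Omega_R(\bx)}\leq 2^{n\lambda_n}\,2^{n\cdot R(\bx)}$; averaging over $\bX\sim Q_n$ and applying \eqref{eq:e1232} gives $\EQn{2^{n\cdot\Omega_R(\bX)}}\leq 2^{n\lambda_n}\EQn{2^{n\cdot R(\bX)}}\leq 2^{n(\lambda_n+\chi_n)}$. Since moreover $\Omega_R(\bx)\geq 0$ (because $Q_n(R(\bX)\geq R(\bx))\leq 1$), the quantity $\EQn{2^{n\cdot\Omega_R(\bX)}}$ lies between $1$ and $2^{n(\lambda_n+\chi_n)}$, so its normalized logarithm tends to $0$, \ie\ it is subexponential. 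The only place that calls for a second of care is noticing this nonnegativity of $\Omega_R$, so that the merged sequence is sandwiched from both sides and is genuinely subexponential rather than merely bounded above.
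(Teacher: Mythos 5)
Your proof is correct and follows essentially the same route as the paper's: Markov's inequality applied to $2^{n\cdot R(\bX)}$ for the displayed bound, the same sandwich argument for asymptotic tightness, and taking expectations of the pointwise bound $2^{n\cdot\Omega_R(\bx)}\leq 2^{n(R(\bx)+\lambda_n)}$ for the final claim. Your added observations (merging $\lambda_n,\chi_n$ into a single vanishing sequence, and noting $\Omega_R\geq 0$ so the expectation is at least $1$ and the normalized log genuinely tends to $0$) are small tidy-ups of details the paper leaves implicit.
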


\ifISITpaperFull
\begin{proof}
By Markov's inequality:
\begin{align*}
2^{-n \cdot \Omega_R(\bx)} =& Q_n(R(\bX) \geq R(\bx)) \\
                           =& Q_n(2^{n \cdot R(\bX)} \geq 2^{n \cdot R(\bx)}) \\
                        \leq& \EQn{2^{n \cdot R(\bX)}} \cdot 2^{-n \cdot R(\bx)}.
\end{align*}
Combining this with \eqref{eq:e1231}, \eqref{eq:e1232}: $2^{-n \cdot (R(\bx)+\lambda_n)} \leq 2^{-n \cdot \Omega_R(\bx)} \leq 2^{-n \cdot (R(\bx)-\chi_n)}$. In other words, $R$ is asymptotically tight.
Observing that \eqref{eq:e1231} implies that $2^{n \cdot (R(\bx)+\lambda_n)} \geq 2^{n \cdot \Omega_R(\bx)}$, and taking $\EQn{\cdot}$ of that we get $\E{2^{n \cdot \Omega_R(\bx)}} \leq 2^{n \cdot (\chi_n +\lambda_n)}$.
\end{proof}
\else
  \begin{proof}
  The proof appears in \cite{ElkayamUniversal} and omitted here due to space limitation.
  \end{proof}
\fi

Recall from definition \ref{def:normal} that the prior $Q_n(\bx)$ is \textbf{normal} if $\EQn{ 2^{n \cdot \Omega_R(\bx)} }$ is subexponential for all $R(\bx)$. The following proposition bounds $\EQn{ 2^{n \cdot R(\bx)} }$ for any rate function $R(\bx)$ and provides a condition on the prior $Q_n(\bx)$ being normal.

\begin{proposition} \label{prop:normal_prior} \mynewline
(a) Suppose that there exists some $B_n$ that bounds the rate function $R$ from above, \ie, $\PR{nR(\bX) \geq B_n}=0$. Then \begin{equation}
\EQn{2^{nR(\bX)}} \leq 1+\ln(2)\cdot B_n.
\end{equation}
(b) If $\cX_n$ is discrete, $\chi_n = -\log \BRA{ \min_{\bx : Q_n(\bx) \neq 0}Q_n(\bx) }$ is a bound on any rate function. In particular, if $\chi_n$ is subexponential then the prior $Q_n(\bx)$ is normal in the sense of definition \ref{def:normal}.
\end{proposition}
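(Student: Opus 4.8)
The plan is to prove part (a) first, then derive part (b) as an easy consequence. For part (a), the key idea is to use the layer-cake (tail-integral) representation of the expectation together with the rate-function bound \eqref{def:rate_func}. Write $Z = 2^{nR(\bX)}$, a nonnegative random variable. Since $\PR{nR(\bX) \geq B_n} = 0$ we have $Z \leq 2^{B_n}$ almost surely. Then I would compute
\begin{align*}
\EQn{Z} = \int_0^{2^{B_n}} \PR{Z \geq z}\, dz = \int_0^{2^{B_n}} \PR{nR(\bX) \geq \log z}\, dz.
\end{align*}
Split the integral at $z = 1$: on $[0,1]$ bound the probability by $1$, contributing at most $1$; on $[1, 2^{B_n}]$ use \eqref{def:rate_func}, i.e. $\PR{nR(\bX) \geq \log z} = \PR{R(\bX) \geq \tfrac{1}{n}\log z} \leq 2^{-\log z} = 1/z$, so that
\begin{align*}
\int_1^{2^{B_n}} \PR{nR(\bX) \geq \log z}\, dz \leq \int_1^{2^{B_n}} \frac{1}{z}\, dz = \ln\BRA{2^{B_n}} = \ln(2)\cdot B_n.
\end{align*}
Adding the two pieces gives $\EQn{2^{nR(\bX)}} \leq 1 + \ln(2)\cdot B_n$, which is part (a). (One can equivalently phrase this via a dyadic decomposition of the event $\{R(\bX) \in [k/n,(k+1)/n)\}$ for integers $0 \le k < B_n$, which avoids the integral entirely; either route is routine.)

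For part (b): when $\cX_n$ is discrete, let $R$ be any rate function over $Q_n(\bx)$. Applying \eqref{def:rate_func} with $t$ slightly below $R(\bx)$, or just noting $Q_n(R(\bX) \ge R(\bx)) \ge Q_n(\bx)$ whenever $Q_n(\bx)>0$, we get $Q_n(\bx) \leq 2^{-nR(\bx)}$, hence $R(\bx) \leq -\tfrac{1}{n}\log Q_n(\bx) \leq \tfrac{1}{n}\chi_n$ for every $\bx$ with $Q_n(\bx) \neq 0$; and $R(\bx) = -\infty$ is harmless on the zero-probability set. Thus $nR(\bX) \leq \chi_n$ almost surely, so $B_n := \chi_n$ is a valid upper bound and part (a) yields $\EQn{2^{nR(\bX)}} \leq 1 + \ln(2)\cdot\chi_n$, which is subexponential when $\chi_n$ is. Finally, to conclude normality I would invoke Proposition \ref{prop:asymptotic_rate_func_cond}: taking $R = \bar{m}$ (the canonical metric, which is a rate function by Proposition \ref{prop:rate_func_properties}(b) applied to the order induced by $m$), condition \eqref{eq:e1232} holds with $\chi_n/n \toinf 0$ and \eqref{eq:e1231} holds trivially since $\Omega_R = R$ for a rate function, so $\EQn{2^{n\Omega_R(\bX)}} = \EQn{2^{nR(\bX)}}$ is subexponential uniformly over all metrics — exactly the definition of a normal prior.

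I do not anticipate a serious obstacle here; the only point requiring a little care is making sure the bound in part (a) is genuinely uniform over all rate functions in part (b) — but since $\chi_n$ depends only on the prior and not on $R$, uniformity is automatic. The secondary subtlety is handling $R(\bx) = -\infty$ and the distinction between $Q_n(R(\bX) \ge R(\bx))$ and $Q_n(R(\bX) > R(\bx) - \epsilon)$ when translating \eqref{def:rate_func} into the pointwise bound $Q_n(\bx) \le 2^{-nR(\bx)}$; this is dispatched by the observation that $\{\bX = \bx\} \subseteq \{R(\bX) \ge R(\bx)\}$ directly.
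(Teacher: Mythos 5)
Your proof is correct and follows essentially the same route as the paper's: part (a) via the tail-integral representation of $\EQn{2^{nR(\bX)}}$, split at $t=1$ with the rate-function bound giving $\int_1^{2^{B_n}} dt/t = \ln(2)B_n$, and part (b) via $Q_n(\bx) \leq Q_n(R(\bX) \geq R(\bx)) \leq 2^{-nR(\bx)}$ to get $nR(\bx) \leq \chi_n$. The only cosmetic difference is that the paper routes the part (b) bound through $\Omega_R$ using Proposition \ref{prop:rate_func_properties}(c), while you argue directly; your extra care about uniformity over metrics and the $R(\bx)=-\infty$ case is sound.
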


\ifISITpaperFull

\begin{proof}
(a) By \cite[Lemma 5.5.1]{koga2002information}, for a non-negative R.V. $Z$ it holds that $\E{Z} = \int_0^{\infty} \PR{Z \geq z}dz$.
\begin{align*}
  \E{2^{nR(\bX)}} &= \int_0^{\infty} \PR{2^{nR(\bX)} \geq t}dt \\
                  &= \int_0^{\infty} \PR{nR(\bX) \geq \log(t)}dt \\
                  &\overset{(1)}{\leq} 1+ \int_1^{2^{B_n}} \PR{nR(\bX) \geq \log(t)}dt \\
                  &= 1+ \int_1^{2^{B_n}} \PR{R(\bX) \geq \frac{\log(t)}{n}}dt \\
                  &\overset{(2)}{\leq} 1+ \int_1^{2^{B_n}} 2^{-n \cdot \frac{\log(t)}{n}} dt \\
                  &= 1+ \int_1^{2^{B_n}} \frac{dt}{t}\\
                  &= 1+ \ln(t)|_{1}^{2^{B_n}} = 1+ \ln(2)\cdot B_n.
\end{align*}
Inequality (1) follows because $\PR{\cdot} \leq 1$ and $\PR{nR(\bX) \geq \log(2^{B_n})}=0$ and (2) because $R$ is a rate function.

(b) For any $\bx$ with $Q_n(\bx) > 0$ we have:
\begin{align*}
n\cdot R(\bx) &\leq n\cdot \Omega_R(\bx) \\
              &= -\log \BRA{Q_n(R(\bX) \geq R(\bx))} \\
              &\leq -\log \BRA{Q_n(\bx)} \\
              &\leq -\log \BRA{ \min_{\bx : Q_n(\bx) \neq 0}Q_n(\bx)}.
\end{align*}
\end{proof}
\else
  \begin{proof}
  The proof appears in \cite{ElkayamUniversal} and omitted here due to space limitation.
  \end{proof}
\fi

\bibliographystyle{IEEEtran}
\bibliography{bib}

\end{document}